\newcommand{\defin}[1]{\emph{#1}}
\newcommand{\lltri}{\ensuremath{\lltriangle}}
\newcommand{\ultri}{\ensuremath{\ultriangle}}
\newcommand{\urtri}{\ensuremath{\urtriangle}}
\newcommand{\lrtri}{\ensuremath{\lrtriangle}}
\newcommand{\laser}{\ensuremath{\conictaper}}
\newcommand{\solid}{\ensuremath{\blacksquare}}
\newcommand{\movable}{\ensuremath{\boxdot}}
\newcommand{\goal}{\ensuremath{\bigoplus}}
\newcommand{\inputleft}{\ensuremath{\rightdotarrow}}
\newcommand{\inputtop}{\rotatebox[origin=c]{270}{\ensuremath{\rightdotarrow}}}
\newcommand{\laserleft}{\ensuremath{\rightdotarrow*}}
\newcommand{\lasertop}{\rotatebox[origin=c]{270}{\ensuremath{\rightdotarrow*}}}
\newcommand{\outright}{\ensuremath{\rightarrow}}
\newcommand{\outdown}{\ensuremath{\downarrow}}
\newcommand{\ANDcolor}{green}
\newcommand{\ORcolor}{violet!50}
\newcommand{\VARIABLEcolor}{yellow}
\newcommand{\SWITCHcolor}{cyan}
\newcommand{\ORcolorBack}{violet!25}
\newcommand{\VARIABLEcolorBack}{yellow!50}
\newcommand{\SWITCHcolorBack}{cyan!30}
\newcommand{\TRIANGcolorBack}{red!45}
\author{Per Alexandersson\inst{1}\orcidID{0000-0003-2176-0554} \and
Petter Restadh\inst{1}\orcidID{0000-0002-3411-8766}}
\authorrunning{P. Alexandersson \and P. Restadh}
\institute{KTH The Royal Insitute of Technology, 100 44 Stockholm, Sweden \\
\email{per.w.alexandersson@gmail.com}\\
\email{petterre@kth.se}
}
\begin{document}

\title{LaserTank is NP-complete}

\maketitle

\begin{abstract}
We show that the classical game \emph{LaserTank} is $\mathrm{NP}$-complete,
even when the tank movement is restricted to a single column 
and the only blocks appearing on the board are mirrors and solid blocks.
We show this by reducing $3$-SAT instances to LaserTank puzzles.

\keywords{NP-completeness \and LaserTank \and 3-SAT}
\end{abstract}

\section{Introduction}

From Wikipedia: ``\emph{LaserTank (also known as Laser Tank) is a computer puzzle game requiring 
logical thinking to solve a variety of levels}''. It was first released on the Windows platform in 1995, 
and a similar game was released in 1998 for the graphing calculator Texas Instruments Ti-83, 
under the name \emph{Laser Mayhem}\footnote{\url{https://www.ticalc.org/archives/files/fileinfo/95/9532.html}}.
To our knowledge, the complexity of LaserTank has not been studied before,
while several other classical games have been shown to be NP-complete, NP-hard or PSPACE-complete.
For example,
\emph{Sokoban} \cite{DorZwick1999},
\emph{Tetris} \cite{DemaineHohenbergerLibenNowell2002},
\emph{Rush Hour} \cite{FlakeBaum2002}, and
\emph{Minesweeper} \cite{Kaye2000} to list a few.

In this short note, we prove the following.
\begin{theorem}\label{thm:MainTheorem}
LaserTank is $\mathrm{NP}$-complete.
\end{theorem}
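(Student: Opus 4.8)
The plan is to prove \Cref{thm:MainTheorem} by establishing that LaserTank lies in $\mathrm{NP}$ and that it is $\mathrm{NP}$-hard, with the hardness argument a polynomial-time reduction from $3$-SAT that already works in the restricted setting of the abstract: the tank confined to a single column, and only solid blocks and (laser-pushable) mirrors on the board. Membership in $\mathrm{NP}$ is the routine direction. After fixing a precise formalization of a position (the board contents together with the tank's cell and facing) and of a single move (drive one step, turn, or fire), a certificate is simply the sequence of moves, and a verifier replays it and checks that the tank finishes on the goal cell \goal; under the usual convention that the instance carries a move budget, so that a solvable instance admits a polynomially bounded certificate, this check runs in polynomial time.

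For $\mathrm{NP}$-hardness, given a $3$-CNF formula $\varphi$ with variables $x_1,\dots,x_n$ and clauses $C_1,\dots,C_m$, I would build a board whose tank column runs vertically through the whole figure, flanked by an arrangement of solid blocks and mirrors that realizes the Boolean function $\varphi$ and whose $n$ inputs are mirrors the tank can reposition. A play has two conceptual phases. In the first, the tank descends its column past $n$ \emph{variable gadgets}, and at the $i$-th one it makes a binary decision that fixes the value of $x_i$ by parking that variable's mirror in one of two cells, this being the only lasting effect of the decision. In the second, the tank fires a single \emph{activation} beam into the circuit; at each variable gadget the beam is routed according to where that mirror currently sits (it meets a reflecting face in either position, so the mirror is consulted but not displaced), and the wiring is arranged so that the beam finally emerges and pushes aside the mirror blocking the route to \goal exactly when every clause of $\varphi$ is satisfied by the recorded assignment. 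Thus $\varphi$ is satisfiable iff the tank can reach the goal; the tank may re-set variables and re-fire, but that only amounts to trying different assignments and cannot help otherwise.

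The substance is the gadget design. The \emph{variable gadget} must present a clean binary choice whose sole effect is to park one mirror in one of two cells and whose setting is read non-destructively by the activation beam; fan-out — a variable occurring in many clauses — is achieved not by duplicating a signal but by routing that single beam so that it passes the variable's mirror once per occurrence, each pass being diverted or not according to the mirror's position. A \emph{clause gadget} implements an OR of three literals: the beam entering it must leave on the ``satisfied'' wire iff at least one literal is currently true, which one builds from three literal-controlled switches in series (a true literal diverts the beam to the satisfied wire, a false one passes it to the next switch). A \emph{combining} gadget chains the clause outputs so that the ``open the goal'' signal appears iff all $m$ clauses fired, realizing the outer AND. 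Finally, since the variable–clause incidence is in general non-planar while the board is a grid, a \emph{crossover}/switch gadget is needed so that two beam segments can cross without interfering. The colours and arrow macros reserved in the preamble for the VARIABLE, AND, OR, and SWITCH gadgets indicate exactly this toolkit.

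I expect the main obstacle to be \emph{rigidity}: showing that the tank cannot reach \goal except in the intended way. The tank may fire in any direction, from any cell of its column, at any time and arbitrarily often, so one must prove that this freedom can neither dislodge a structural mirror of the circuit, nor open the goal-blocking mirror by some stray or reflected beam, nor realize an ``inconsistent'' assignment, nor otherwise short-circuit a gadget — equivalently, that the circuit keeps its state only in the $n$ variable mirrors and that every play reaching the goal corresponds to a genuine satisfying assignment of $\varphi$. Proving this means checking each gadget under every admissible incoming beam direction and exhibiting a global invariant preserved by all tank actions; that case analysis, together with the crossover construction and the comparatively routine verifications that the board has size polynomial in $n+m$ and that a satisfying assignment yields an explicit winning play, is where the real work lies.
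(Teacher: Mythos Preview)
Your high-level plan---reduce from $3$-SAT via variable, OR, AND, and switch gadgets---matches the paper's, but the mechanics differ in two places. First, the win condition in the paper's LaserTank is to \emph{hit} \goal{} with the laser, not to drive the tank onto it; your ``push aside the blocking mirror and walk in'' is aimed at the wrong target, though the fix is trivial. Second, and more substantively, you propose a \emph{single} activation beam that is routed through the whole circuit, reading each variable mirror non-destructively and finally performing one push. The paper instead has the tank fire \emph{many} independent shots from its column, with the inter-gadget ``signal'' carried not by a travelling beam but by the positions of pushed pieces: each literal gadget exposes rows for $X$ and $\neg X$; each \texttt{three-or} and the \texttt{and} gadget have their own always-available activation rows; the player sets the variables, uses the $\urtri$ staircase at the top to activate switches that redirect each chosen literal into its OR column, then fires the OR and AND activations in turn. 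Because each shot is independent and the OR gadgets occupy disjoint column ranges, the paper needs \emph{no crossover gadget at all}---the non-planarity you anticipate simply does not arise. Your single-beam design could in principle be made to work, but it forces genuine crossovers and a much more delicate global rigidity analysis; the paper's multi-shot design sidesteps both and makes each gadget locally verifiable.

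On $\mathrm{NP}$ membership you invoke a move budget; the paper does not, arguing instead (\cref{lem:PolyRunTIme}) that the laser dynamics are time-reversible, so each shot terminates in $O(n)$ steps and a $k$-move solution verifies in $O(kn)$. The paper is, however, silent on why $k$ may be taken polynomial for an arbitrary instance---precisely the gap your move-budget convention is meant to cover.
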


It should be noted that one can perhaps apply more general meta-theoretical 
approaches for puzzle games and planning games in particular,
to prove NP-completeness. It is likely that the framework by G. Viglietta \cite{Viglietta2013} ---
which can be applied to games such as \emph{Boulder Dash}, \emph{Pipe Mania} and \emph{Starcraft} ---
can successfully be applied to LaserTank as well.
We opted for a self-contained hands-on approach where 3-SAT is reduced to LaserTank.
Furthermore, we only use a small subset of the available pieces in the original game,
as well as restrict the movement of the tank in two directions. 
These restrictions have the benefit that they imply that the \emph{Laser Mayhem} variant is also NP-complete.

\subsection{Short background on 3-SAT}

A \defin{$3$-SAT expression} $E$ is a conjunction of clauses, where each clause involves exactly three distinct literals.
A literal is either a boolean variable, or its negation.
The \defin{$3$-SAT problem} states: Determine if $E$ is \defin{satisfiable} ---  that is, there is an assignment of 
truth values to the variables that makes $E$ true.
For example, 
$
E = (x\vee y \vee \neg z) \wedge (\neg x\vee \neg y \vee \neg z) \wedge (x\vee \neg y \vee z)
$
is such a conjunction, and the assignment $x,z=\mathtt{true}$, $y=\mathtt{false}$ shows that $E$
is satisfiable. Determining satisfiability of a $3$-SAT expression is an NP-complete problem \cite{Cook1971}.

\section{LaserTank}

LaserTank is a turn-based single-player puzzle game played on a $2$-dimensional grid (the \emph{board}),
where in each turn, the player either moves the tank, or fires a laser from the tank.
The laser interacts with different \emph{pieces} on the board, and the goal is to
hit a certain piece with the laser.
The pieces\footnote{For a complete list of pieces available in the official game, see \url{https://lasertanksolutions.blogspot.com/p/in-my-opinionlaser-tank-is-best-logic.html}}
we use are
\emph{mirrors}  $\{\lltri, \, \urtri, \, \lrtri, \, \ultri\}$,
\emph{solid blocks} $\solid$,
\emph{movable blocks} $\movable$,
the \emph{tank} $\laser$, and
the \emph{goal} $\goal$.
The tank is the only piece directly controlled by the player, and the laser exits the tank from the front, 
which is the pointy end of $\laser$.
In our version, the tank is restricted to sideways movement only, see \cref{ex:tankEx}
The tank can fire a laser from the front. If the laser hits a mirror on a slanted edge it is reflected.
When a mirror is hit on one of the two (non-reflective) short edges by the laser, 
the mirror is pushed in the direction of the laser.
A movable block is pushed one step if it is hit by the laser.
A movable block or a mirror is only pushed if the tile directly behind it is empty.
The aim of the puzzle is to hit the goal piece with the laser.
The solid blocks do not allow lasers or the tank to pass through and they do not move when hit by the laser.
The following example shows all game mechanics in action.

\begin{example}\label{ex:tankEx}
 Here is a small instance of the problem, with a step-by-step solution.
 The tank fires a laser which moves a mirror (1),
 then takes one step sideways, (2). It then shoots a laser at the movable block (3),
 and finally moves in position to have a clear shot of the goal (4).
 \[
 \ytableausetup{boxsize=1.0em}
\begin{ytableau}
\solid &\solid & \solid & \solid & \solid \\
 \laser & \ultri & &  & \urtri \\
 &  & \movable &    & \goal\\
 &  &\lrtri & & \\
\solid &  \solid & \solid & \solid & \solid  \\
\end{ytableau}
\xrightarrow{(1)}
\begin{ytableau}
\solid &\solid & \solid & \solid & \solid \\
 \laser &  & \ultri &  & \urtri \\
 &  & \movable &    & \goal\\
 &  &\lrtri & & \\
\solid &  \solid & \solid & \solid & \solid  \\
\end{ytableau}
\xrightarrow{(2)}
\begin{ytableau}
\solid &\solid & \solid & \solid & \solid \\
  &  & \ultri &  & \urtri \\
 \laser &  & \movable &    & \goal\\
 &  &\lrtri & & \\
\solid &  \solid & \solid & \solid & \solid  \\
\end{ytableau}
\xrightarrow{(3)}
\begin{ytableau}
\solid &\solid & \solid & \solid & \solid \\
 &  & \ultri &  & \urtri \\
\laser &  &  & \movable   & \goal\\
 &  &\lrtri & & \\
\solid &  \solid & \solid & \solid & \solid  \\
\end{ytableau}
\xrightarrow{(4)}
\begin{ytableau}
\solid &\solid & \solid & \solid & \solid \\
 &  & \ultri &  & \urtri \\
 &  &  & \movable   & \goal\\
\laser &  &\lrtri & & \\
\solid &  \solid & \solid & \solid & \solid  \\
\end{ytableau}
\]
\end{example}

Our goal is now to construct puzzles which imitates an instance of $3$-SAT.
We employ so called \defin{gadgets} that emulate boolean operations.
Below, we let $\inputleft$, $\inputtop$ indicate the inputs to the gadgets (considered as boolean variables),
and $\{\laserleft,\,\lasertop\}$ indicate inputs that are always available as clear shots from
the tank. The latter are used for producing the output of the gadget.

\textbf{The \texttt{and} gadget}.
The configuration in \cref{fig:gadgets}a serves as our \texttt{and}-gadget.
We need to shoot through both $A$ and $B$ in order to allow 
for $A\wedge B = X$ as output.
Notice that the two movable blocks can only be moved up, right and down. 
If we want the gadget to produce an output through $X$, all movable blocks must be moved out of the way.
This can only be accomplished if the movable block must have been moved to 
the right via activation from both $A$ and $B$, which shows that the gadget is indeed an \texttt{and}-gadget.
The \texttt{and}-gadget can easily be generalized to more than two inputs.

\begin{figure}[!ht]
\[
 \ytableausetup{boxsize=1.2em}
\substack{
\begin{ytableau}
\none&\solid & \solid & \solid \\
\none&\solid & \ultri & \outright & \none[X] \\
\none&\solid &  & \solid \\
\none[B]&\inputleft & \movable & & \solid \\
\none&\solid &  & \solid \\
\none[A]&\inputleft & \movable & & \solid \\
\none&\solid &  & \solid \\
\none&\laserleft & \lrtri & \solid \\
\none&\solid &  \solid  & \solid \\
\end{ytableau}
\\ \, \\ \text{(a) AND}}
\qquad\quad
\substack{
\begin{ytableau}
\none & \none[A] &\none & \none[B]&\none & \none[C] \\
\solid & \inputtop &\solid& \inputtop &\solid& \inputtop & \solid  \\
\solid & \ultri &\solid& \ultri &\solid& \ultri  & \solid \\
\solid &  &&&  & & \outright & \none[X]\\
\laserleft &&&  &  & \lrtri  & \solid \\
\laserleft &  && \lrtri &&   & \solid \\
\laserleft & \lrtri &\solid  & \solid  &\solid&\solid& \solid \\
\end{ytableau}
\\ \, \\ \text{(b) THREE-OR}
}
\qquad
\substack{
\begin{ytableau}
\none & \solid &\solid &\solid&\solid\\
\none[X]&\laserleft&\urtri & \solid&\solid \\
 \none&\solid&&\solid&\ultri&\outright&\none[\neg X]\\
\none[\neg X]&\laserleft&\movable&&\lrtri&\solid\\
\none&\solid&&\solid &\solid\\
\none&\solid &\lltri&&\outright&\none[X] \\
\none&\solid&\solid&\solid&\solid  \\
\end{ytableau}
\\ \, \\ \text{(c) LITERAL}
}
\quad
\substack{
\begin{ytableau}
\none&&\lasertop\\
\none&\solid&\urtri\\
\none[X] &\inputleft&\\
\none&\ultri&\lrtri\\
\none&\outdown&
\end{ytableau}
\\ \, \\  \text{\hspace{5mm}(d) SWITCH}
}
\]
\caption{The and-gadget, three-or-gadget, literal-gadget, and switch-gadget.}\label{fig:gadgets}
\end{figure}

\textbf{The \texttt{three-or} gadget} is depicted in \cref{fig:gadgets}b.
If either of the inputs $A$, $B$ or $C$ are available, then $X$ allows for output. 
The only way to produce output from $X$ is to move a $\ultri$ to the same row as $X$. 
The $\ultri$ can only be moved into that row from above and thus we must have some input from $A$, $B$ or $C$
in order for a laser to pass out through $X$. 
Thus the \texttt{three-or} gadget works the way intended.

\textbf{The \texttt{literal} gadget} is depicted in \cref{fig:gadgets}c.
This gadget emulates a literal, with two different mutually exclusive outputs 
depending on the choice of value of the literal.
To unlock $X$ as output, fire once through $\neg X$ first.
This moves the movable block out of the way but 
prevents $\neg X$ from being available as output. 
Similarly for $\neg X$.

\textbf{The \texttt{switch} gadget} is depicted in \cref{fig:gadgets}d.
The \texttt{switch}-gadget is our main building block 
for encoding an instance of a $3$-SAT problem.
It allows for the input $X$ to be available first as output to the right, then 
redirected down. This allows $X$ to be used in multiple \texttt{or}-clauses.

\begin{example}
In the puzzle in \cref{fig:switchExample}, only a single ``input'', $X$, is available.
However, with the \texttt{switches} we can redirect input $X$ to activate the \texttt{and} gadget.
Notice the two $\urtri$ pieces that are required to activate the \texttt{switches}
and that the rightmost switch gadget must be used first in order to solve the puzzle.
This is also true in the general setup, where switches should be used from right to left.
\begin{figure}[!ht]
\begin{tikzpicture}[scale=0.75]
\node at (0,0) {$
\ytableausetup{boxsize=1.2em}
\begin{ytableau}
\solid&\solid&\solid&\solid&\solid&\solid&\solid&\solid&\solid&\solid&\solid\\
\laser&&&&&&&&\urtri&\solid&\solid\\
&&&&&\urtri&\solid&&&&\solid\\
&\solid&\solid&\solid&&&&&&&\solid\\
&\solid&\solid&\solid&*(\SWITCHcolor)\solid&*(\SWITCHcolor)\urtri&&*(\SWITCHcolor)\solid&*(\SWITCHcolor)\urtri&&\solid\\
&X&\laserleft&&*(\SWITCHcolor)&*(\SWITCHcolor)&&*(\SWITCHcolor)&*(\SWITCHcolor)&&\solid\\
&\solid&\solid&\solid&*(\SWITCHcolor)\ultri&*(\SWITCHcolor)\lrtri&&*(\SWITCHcolor)\ultri&*(\SWITCHcolor)\lrtri&&\solid\\
&&\urtri&\solid&&&&&&&\solid\\
&*(\ANDcolor)\solid&*(\ANDcolor)&*(\ANDcolor)\solid&*(\ANDcolor)&*(\ANDcolor)\solid&*(\ANDcolor)\solid&*(\ANDcolor)&*(\ANDcolor)\solid&*(\ANDcolor)\solid&*(\ANDcolor)\solid\\
&*(\ANDcolor)\solid&*(\ANDcolor)\lltri&*(\ANDcolor)&*(\ANDcolor)\movable&*(\ANDcolor)&*(\ANDcolor)&*(\ANDcolor)\movable&*(\ANDcolor)&*(\ANDcolor)\urtri&*(\ANDcolor)\solid\\
&*(\ANDcolor)\solid&*(\ANDcolor)\solid&*(\ANDcolor)\solid&*(\ANDcolor)&*(\ANDcolor)\solid&*(\ANDcolor)\solid&*(\ANDcolor)&*(\ANDcolor)\solid&*(\ANDcolor)&*(\ANDcolor)\solid\\
&&&&*(\ANDcolor)\solid&&&*(\ANDcolor)\solid&\solid&\goal&\solid\\
\solid&\solid&\solid&\solid&\solid&\solid&\solid&\solid&\solid&\solid&\solid\\
\end{ytableau}
$};
\end{tikzpicture}
\;
\begin{tikzpicture}[rotate=90, every node/.style={rotate=90}]
\begin{scope}[scale=0.57]
\path[fill = \TRIANGcolorBack, draw=black] (0,2) rectangle (5.5,0);
\path[fill = \SWITCHcolorBack, draw=black] (0,0) rectangle (5.5,-7);
\path[fill = \ORcolorBack, draw=black] (0,-7) rectangle (5.5,-10.5);
\path[fill = \VARIABLEcolorBack, draw=black] (0,0) rectangle (-1,-7);
\draw[step=0.5cm,gray,very thin] (-1.5,2) grid (7.5,-11);
\path[fill = \VARIABLEcolor, draw=black] (0,0) rectangle (-1,-2);
\node at (-0.5, -1) {$X_1$};
\path[fill = \VARIABLEcolor, draw=black] (0,-2) rectangle (-1,-4);
\node at (-0.5, -3) {$X_2$};
\node at (-0.5,-4.5) {$\vdots$};
\path[fill = \VARIABLEcolor, draw=black] (0,-5) rectangle (-1,-7);
\node at (-0.5, -6) {$X_n$};
\path[fill = \ORcolor, draw=black] (0,-7) rectangle (1.5,-8);
\node at (0.75, -7.5) {$\scriptstyle{\mathtt{OR}}$};
\path[fill = \ORcolor, draw=black] (1.5,-8) rectangle (3,-9);
\node at (2.25, -8.5) {$\scriptstyle{\mathtt{OR}}$};
\node at (3.5,-9.25) {$\ddots$};
\path[fill = \ORcolor, draw=black] (4,-9.5) rectangle (5.5,-10.5);
\node at (4.75, -10) {$\scriptstyle{\mathtt{OR}}$};
\path[fill = \ANDcolor, draw=black] (5.5,-11) rectangle (6.5,-7);
\node at (6, -9) {$\scriptstyle{\mathtt{AND}}$};
\path[fill = \SWITCHcolor, draw=black] (0,0) rectangle (0.5,-1);
\path[fill = \SWITCHcolor, draw=black] (0.5,-3) rectangle (1,-4);
\path[fill = \SWITCHcolor, draw=black] (1,-5) rectangle (1.5,-6);
\path[fill = \SWITCHcolor, draw=black] (1.5,-1) rectangle (2,-2);
\path[fill = \SWITCHcolor, draw=black] (2,-3) rectangle (2.5,-4);
\path[fill = \SWITCHcolor, draw=black] (2.5,-4) rectangle (3,-5);

\node at (-1.25,-0.25){$\laser$};
\node at (0.25,.25){$\urtri$};
\node at (0.75,.5){$\urtri$};
\node at (1.25,.75){$\urtri$};
\node at (1.75,1){$\urtri$};
\node at (2.25,1.25){$\urtri$};
\node at (2.75,1.5){$\urtri$};
\node at (7, -7.5) {$\goal$};
\end{scope}
\end{tikzpicture}
\caption{
Left: A small puzzle showing the use of two \texttt{switch}-gadgets and one \texttt{and}-gadget.
Right: Layout of a general $3$-SAT puzzle.
Above each \texttt{or}-gadget are three switches, corresponding to the three
literals involved on the or-clause.}
\label{fig:switchExample}\label{fig:generalLayout}
\end{figure}
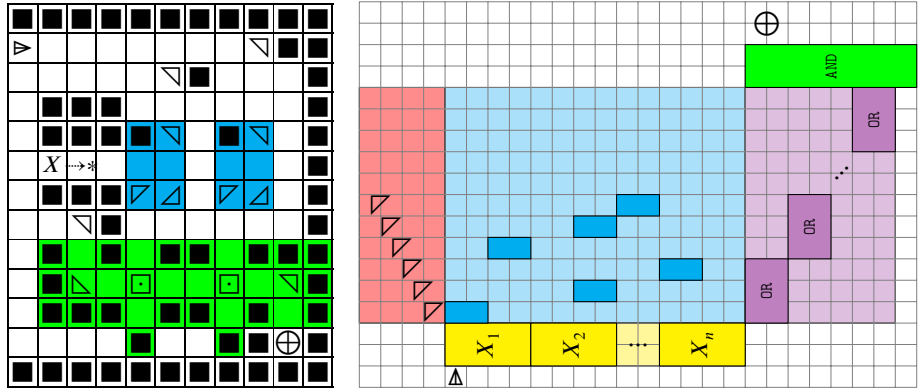
\end{example}

\subsection{The reduction}

A 3-SAT expression may now be encoded as a LaserTank puzzle as follows.
There is one \texttt{literal}-gadget for each variable appearing in the expression,
a \texttt{three-or-gadget} for each \texttt{or}-clause, and a single \texttt{and}-gadget 
with multiple inputs is used to bind all together.
The puzzle is designed such that the output of the \texttt{and}-gadget is the only way to hit the goal.
The general layout of such a puzzle is shown in \cref{fig:generalLayout}.
For each three-or-clause in the $3$-SAT expression,
three switches are placed on the board corresponding to the three literals involved.
In other words, the clauses of the $3$-SAT expression are encoded via \texttt{switch}-gadgets.
The switches can always be activated via the $\urtri$ pieces at the 
top of the board as in \cref{fig:switchExample}.
As a concrete example, the expression $(A\vee B\vee \neg C)\wedge (A\vee \neg B\vee C)$ 
is encoded as the puzzle shown in \cref{fig:fullExample}.
\begin{figure}
\begin{tikzpicture}[scale=0.7, rotate=90, every node/.style={transform shape}]
\node at (0,0) {$
\begin{ytableau}[]
\none&\solid&\solid&\solid&\solid&\solid&\solid&\solid&\solid&\solid&\solid&\solid&\solid&\solid&\solid&\solid&\solid&\solid&\solid&\solid&\solid&\solid&\solid&\solid&\solid\\
\none&\laser&&&&&&*(\TRIANGcolorBack)&*(\TRIANGcolorBack)&*(\TRIANGcolorBack)&*(\TRIANGcolorBack)&*(\TRIANGcolorBack)&*(\TRIANGcolorBack)&*(\TRIANGcolorBack)&*(\TRIANGcolorBack)&*(\TRIANGcolorBack)&*(\TRIANGcolorBack)&*(\TRIANGcolorBack)&*(\TRIANGcolorBack)&*(\TRIANGcolorBack)\urtri&&&&&\\
\none&&&&&&&*(\TRIANGcolorBack)&*(\TRIANGcolorBack)&*(\TRIANGcolorBack)&*(\TRIANGcolorBack)&*(\TRIANGcolorBack)&*(\TRIANGcolorBack)&*(\TRIANGcolorBack)&*(\TRIANGcolorBack)&*(\TRIANGcolorBack)&*(\TRIANGcolorBack)&*(\TRIANGcolorBack)\urtri&*(\TRIANGcolorBack)&*(\TRIANGcolorBack)&&&&&\\
\none&&&&&&&*(\TRIANGcolorBack)&*(\TRIANGcolorBack)&*(\TRIANGcolorBack)&*(\TRIANGcolorBack)&*(\TRIANGcolorBack)&*(\TRIANGcolorBack)&*(\TRIANGcolorBack)&*(\TRIANGcolorBack)&*(\TRIANGcolorBack)\urtri&*(\TRIANGcolorBack)&*(\TRIANGcolorBack)&*(\TRIANGcolorBack)&*(\TRIANGcolorBack)&&&&&\\
\none&&&&&&&*(\TRIANGcolorBack)&*(\TRIANGcolorBack)&*(\TRIANGcolorBack)&*(\TRIANGcolorBack)&*(\TRIANGcolorBack)&*(\TRIANGcolorBack)\urtri&*(\TRIANGcolorBack)&*(\TRIANGcolorBack)&*(\TRIANGcolorBack)&*(\TRIANGcolorBack)&*(\TRIANGcolorBack)&*(\TRIANGcolorBack)&*(\TRIANGcolorBack)&&&&&\\
\none&&&&&&&*(\TRIANGcolorBack)&*(\TRIANGcolorBack)&*(\TRIANGcolorBack)&*(\TRIANGcolorBack)\urtri&*(\TRIANGcolorBack)&*(\TRIANGcolorBack)&*(\TRIANGcolorBack)&*(\TRIANGcolorBack)&*(\TRIANGcolorBack)&*(\TRIANGcolorBack)&*(\TRIANGcolorBack)&*(\TRIANGcolorBack)&*(\TRIANGcolorBack)&&&&&\\
\none&&&&&&&*(\TRIANGcolorBack)&*(\TRIANGcolorBack)\urtri&*(\TRIANGcolorBack)&*(\TRIANGcolorBack)&*(\TRIANGcolorBack)&*(\TRIANGcolorBack)&*(\TRIANGcolorBack)&*(\TRIANGcolorBack)&*(\TRIANGcolorBack)&*(\TRIANGcolorBack)&*(\TRIANGcolorBack)&*(\TRIANGcolorBack)&*(\TRIANGcolorBack)&&&&&\\
\none&&*(\VARIABLEcolor)\solid&*(\VARIABLEcolor)\solid&*(\VARIABLEcolor)\solid&*(\VARIABLEcolor)\solid&&*(\SWITCHcolorBack)&*(\SWITCHcolorBack)&*(\SWITCHcolorBack)&*(\SWITCHcolorBack)&*(\SWITCHcolorBack)&*(\SWITCHcolorBack)&*(\SWITCHcolorBack)&*(\SWITCHcolorBack)&*(\SWITCHcolorBack)&*(\SWITCHcolorBack)&*(\SWITCHcolorBack)&*(\SWITCHcolorBack)&*(\SWITCHcolorBack)&&&&&\\
\none[A]&&*(\VARIABLEcolor)\inputleft&*(\VARIABLEcolor)\urtri&*(\VARIABLEcolor)\solid&*(\VARIABLEcolor)\solid&&*(\SWITCHcolorBack)&*(\SWITCHcolorBack)&*(\SWITCHcolorBack)&*(\SWITCHcolorBack)&*(\SWITCHcolorBack)&*(\SWITCHcolorBack)&*(\SWITCHcolorBack)&*(\SWITCHcolorBack)&*(\SWITCHcolorBack)&*(\SWITCHcolorBack)&*(\SWITCHcolorBack)&*(\SWITCHcolorBack)&*(\SWITCHcolorBack)&&&&&\\
\none&&*(\VARIABLEcolor)\solid&*(\VARIABLEcolor)&*(\VARIABLEcolor)\solid&*(\VARIABLEcolor)\ultri&*(\VARIABLEcolor)&*(\SWITCHcolorBack)&*(\SWITCHcolorBack)&*(\SWITCHcolorBack)&*(\SWITCHcolorBack)&*(\SWITCHcolorBack)&*(\SWITCHcolorBack)&*(\SWITCHcolorBack)&*(\SWITCHcolorBack)&*(\SWITCHcolorBack)&*(\SWITCHcolorBack)&*(\SWITCHcolorBack)&*(\SWITCHcolorBack)&*(\SWITCHcolorBack)&&&&&\\
\none[\neg A]&&*(\VARIABLEcolor)\inputleft&*(\VARIABLEcolor)\movable&*(\VARIABLEcolor)&*(\VARIABLEcolor)\lrtri&*(\VARIABLEcolor)\solid&*(\SWITCHcolorBack)&*(\SWITCHcolorBack)&*(\SWITCHcolorBack)&*(\SWITCHcolorBack)&*(\SWITCHcolorBack)&*(\SWITCHcolorBack)&*(\SWITCHcolorBack)&*(\SWITCHcolorBack)&*(\SWITCHcolorBack)&*(\SWITCHcolorBack)&*(\SWITCHcolorBack)&*(\SWITCHcolorBack)&*(\SWITCHcolorBack)&&&&&\\
\none&&*(\VARIABLEcolor)\solid&*(\VARIABLEcolor)&*(\VARIABLEcolor)\solid&*(\VARIABLEcolor)\solid&&*(\SWITCHcolor)\solid&*(\SWITCHcolor)\urtri&*(\SWITCHcolorBack)&*(\SWITCHcolorBack)&*(\SWITCHcolorBack)&*(\SWITCHcolorBack)&*(\SWITCHcolorBack)&*(\SWITCHcolor)\solid&*(\SWITCHcolor)\urtri&*(\SWITCHcolorBack)&*(\SWITCHcolorBack)&*(\SWITCHcolorBack)&*(\SWITCHcolorBack)&&&&&\\
\none&&*(\VARIABLEcolor)\solid&*(\VARIABLEcolor)\lltri&*(\VARIABLEcolor)&*(\VARIABLEcolor)&&*(\SWITCHcolor)&*(\SWITCHcolor)&*(\SWITCHcolorBack)&*(\SWITCHcolorBack)&*(\SWITCHcolorBack)&*(\SWITCHcolorBack)&*(\SWITCHcolorBack)&*(\SWITCHcolor)&*(\SWITCHcolor)&*(\SWITCHcolorBack)&*(\SWITCHcolorBack)&*(\SWITCHcolorBack)&*(\SWITCHcolorBack)&&&&&\\
\none&&*(\VARIABLEcolor)\solid&*(\VARIABLEcolor)\solid&*(\VARIABLEcolor)\solid&*(\VARIABLEcolor)\solid&&*(\SWITCHcolor)\ultri&*(\SWITCHcolor)\lrtri&*(\SWITCHcolorBack)&*(\SWITCHcolorBack)&*(\SWITCHcolorBack)&*(\SWITCHcolorBack)&*(\SWITCHcolorBack)&*(\SWITCHcolor)\ultri&*(\SWITCHcolor)\lrtri&*(\SWITCHcolorBack)&*(\SWITCHcolorBack)&*(\SWITCHcolorBack)&*(\SWITCHcolorBack)&&&&&\\
\none&&*(\VARIABLEcolor)\solid&*(\VARIABLEcolor)\solid&*(\VARIABLEcolor)\solid&*(\VARIABLEcolor)\solid&&*(\SWITCHcolorBack)&*(\SWITCHcolorBack)&*(\SWITCHcolorBack)&*(\SWITCHcolorBack)&*(\SWITCHcolorBack)&*(\SWITCHcolorBack)&*(\SWITCHcolorBack)&*(\SWITCHcolorBack)&*(\SWITCHcolorBack)&*(\SWITCHcolorBack)&*(\SWITCHcolorBack)&*(\SWITCHcolorBack)&*(\SWITCHcolorBack)&&&&&\\
\none[B]&&*(\VARIABLEcolor)\inputleft&*(\VARIABLEcolor)\urtri&*(\VARIABLEcolor)\solid&*(\VARIABLEcolor)\solid&&*(\SWITCHcolorBack)&*(\SWITCHcolorBack)&*(\SWITCHcolorBack)&*(\SWITCHcolorBack)&*(\SWITCHcolorBack)&*(\SWITCHcolorBack)&*(\SWITCHcolorBack)&*(\SWITCHcolorBack)&*(\SWITCHcolorBack)&*(\SWITCHcolor)\solid&*(\SWITCHcolor)\urtri&*(\SWITCHcolorBack)&*(\SWITCHcolorBack)&&&&&\\
\none&&*(\VARIABLEcolor)\solid&*(\VARIABLEcolor)&*(\VARIABLEcolor)\solid&*(\VARIABLEcolor)\ultri&*(\VARIABLEcolor)&*(\SWITCHcolorBack)&*(\SWITCHcolorBack)&*(\SWITCHcolorBack)&*(\SWITCHcolorBack)&*(\SWITCHcolorBack)&*(\SWITCHcolorBack)&*(\SWITCHcolorBack)&*(\SWITCHcolorBack)&*(\SWITCHcolorBack)&*(\SWITCHcolor)&*(\SWITCHcolor)&*(\SWITCHcolorBack)&*(\SWITCHcolorBack)&&&&&\\
\none[\neg B]&&*(\VARIABLEcolor)\inputleft&*(\VARIABLEcolor)\movable&*(\VARIABLEcolor)&*(\VARIABLEcolor)\lrtri&*(\VARIABLEcolor)\solid&*(\SWITCHcolorBack)&*(\SWITCHcolorBack)&*(\SWITCHcolorBack)&*(\SWITCHcolorBack)&*(\SWITCHcolorBack)&*(\SWITCHcolorBack)&*(\SWITCHcolorBack)&*(\SWITCHcolorBack)&*(\SWITCHcolorBack)&*(\SWITCHcolor)\ultri&*(\SWITCHcolor)\lrtri&*(\SWITCHcolorBack)&*(\SWITCHcolorBack)&&&&&\\
\none&&*(\VARIABLEcolor)\solid&*(\VARIABLEcolor)&*(\VARIABLEcolor)\solid&*(\VARIABLEcolor)\solid&&*(\SWITCHcolorBack)&*(\SWITCHcolorBack)&*(\SWITCHcolor)\solid&*(\SWITCHcolor)\urtri&*(\SWITCHcolorBack)&*(\SWITCHcolorBack)&*(\SWITCHcolorBack)&*(\SWITCHcolorBack)&*(\SWITCHcolorBack)&*(\SWITCHcolorBack)&*(\SWITCHcolorBack)&*(\SWITCHcolorBack)&*(\SWITCHcolorBack)&&&&&\\
\none&&*(\VARIABLEcolor)\solid&*(\VARIABLEcolor)\lltri&*(\VARIABLEcolor)&*(\VARIABLEcolor)&&*(\SWITCHcolorBack)&*(\SWITCHcolorBack)&*(\SWITCHcolor)&*(\SWITCHcolor)&*(\SWITCHcolorBack)&*(\SWITCHcolorBack)&*(\SWITCHcolorBack)&*(\SWITCHcolorBack)&*(\SWITCHcolorBack)&*(\SWITCHcolorBack)&*(\SWITCHcolorBack)&*(\SWITCHcolorBack)&*(\SWITCHcolorBack)&&&&&\\
\none&&*(\VARIABLEcolor)\solid&*(\VARIABLEcolor)\solid&*(\VARIABLEcolor)\solid&*(\VARIABLEcolor)\solid&&*(\SWITCHcolorBack)&*(\SWITCHcolorBack)&*(\SWITCHcolor)\ultri&*(\SWITCHcolor)\lrtri&*(\SWITCHcolorBack)&*(\SWITCHcolorBack)&*(\SWITCHcolorBack)&*(\SWITCHcolorBack)&*(\SWITCHcolorBack)&*(\SWITCHcolorBack)&*(\SWITCHcolorBack)&*(\SWITCHcolorBack)&*(\SWITCHcolorBack)&&&&&\\
\none&&*(\VARIABLEcolor)\solid&*(\VARIABLEcolor)\solid&*(\VARIABLEcolor)\solid&*(\VARIABLEcolor)\solid&&*(\SWITCHcolorBack)&*(\SWITCHcolorBack)&*(\SWITCHcolorBack)&*(\SWITCHcolorBack)&*(\SWITCHcolorBack)&*(\SWITCHcolorBack)&*(\SWITCHcolorBack)&*(\SWITCHcolorBack)&*(\SWITCHcolorBack)&*(\SWITCHcolorBack)&*(\SWITCHcolorBack)&*(\SWITCHcolorBack)&*(\SWITCHcolorBack)&&&&&\\
\none[C]&&*(\VARIABLEcolor)\inputleft&*(\VARIABLEcolor)\urtri&*(\VARIABLEcolor)\solid&*(\VARIABLEcolor)\solid&&*(\SWITCHcolorBack)&*(\SWITCHcolorBack)&*(\SWITCHcolorBack)&*(\SWITCHcolorBack)&*(\SWITCHcolor)\solid&*(\SWITCHcolor)\urtri&*(\SWITCHcolorBack)&*(\SWITCHcolorBack)&*(\SWITCHcolorBack)&*(\SWITCHcolorBack)&*(\SWITCHcolorBack)&*(\SWITCHcolorBack)&*(\SWITCHcolorBack)&&&&&\\
\none&&*(\VARIABLEcolor)\solid&*(\VARIABLEcolor)&*(\VARIABLEcolor)\solid&*(\VARIABLEcolor)\ultri&*(\VARIABLEcolor)&*(\SWITCHcolorBack)&*(\SWITCHcolorBack)&*(\SWITCHcolorBack)&*(\SWITCHcolorBack)&*(\SWITCHcolor)&*(\SWITCHcolor)&*(\SWITCHcolorBack)&*(\SWITCHcolorBack)&*(\SWITCHcolorBack)&*(\SWITCHcolorBack)&*(\SWITCHcolorBack)&*(\SWITCHcolorBack)&*(\SWITCHcolorBack)&&&&&\\
\none[\neg C]&&*(\VARIABLEcolor)\inputleft&*(\VARIABLEcolor)\movable&*(\VARIABLEcolor)&*(\VARIABLEcolor)\lrtri&*(\VARIABLEcolor)\solid&*(\SWITCHcolorBack)&*(\SWITCHcolorBack)&*(\SWITCHcolorBack)&*(\SWITCHcolorBack)&*(\SWITCHcolor)\ultri&*(\SWITCHcolor)\lrtri&*(\SWITCHcolorBack)&*(\SWITCHcolorBack)&*(\SWITCHcolorBack)&*(\SWITCHcolorBack)&*(\SWITCHcolorBack)&*(\SWITCHcolorBack)&*(\SWITCHcolorBack)&&&&&\\
\none&&*(\VARIABLEcolor)\solid&*(\VARIABLEcolor)&*(\VARIABLEcolor)\solid&*(\VARIABLEcolor)\solid&&*(\SWITCHcolorBack)&*(\SWITCHcolorBack)&*(\SWITCHcolorBack)&*(\SWITCHcolorBack)&*(\SWITCHcolorBack)&*(\SWITCHcolorBack)&*(\SWITCHcolorBack)&*(\SWITCHcolorBack)&*(\SWITCHcolorBack)&*(\SWITCHcolorBack)&*(\SWITCHcolorBack)&*(\SWITCHcolor)\solid&*(\SWITCHcolor)\urtri&&&&&\\
\none&&*(\VARIABLEcolor)\solid&*(\VARIABLEcolor)\lltri&*(\VARIABLEcolor)&*(\VARIABLEcolor)&&*(\SWITCHcolorBack)&*(\SWITCHcolorBack)&*(\SWITCHcolorBack)&*(\SWITCHcolorBack)&*(\SWITCHcolorBack)&*(\SWITCHcolorBack)&*(\SWITCHcolorBack)&*(\SWITCHcolorBack)&*(\SWITCHcolorBack)&*(\SWITCHcolorBack)&*(\SWITCHcolorBack)&*(\SWITCHcolor)&*(\SWITCHcolor)&&&&&\\
\none&&*(\VARIABLEcolor)\solid&*(\VARIABLEcolor)\solid&*(\VARIABLEcolor)\solid&*(\VARIABLEcolor)\solid&&*(\SWITCHcolorBack)&*(\SWITCHcolorBack)&*(\SWITCHcolorBack)&*(\SWITCHcolorBack)&*(\SWITCHcolorBack)&*(\SWITCHcolorBack)&*(\SWITCHcolorBack)&*(\SWITCHcolorBack)&*(\SWITCHcolorBack)&*(\SWITCHcolorBack)&*(\SWITCHcolorBack)&*(\SWITCHcolor)\ultri&*(\SWITCHcolor)\lrtri&*(\ANDcolor)\solid&*(\ANDcolor)\solid&*(\ANDcolor)\solid&\solid&\solid\\
\none&&&&&&*(\ORcolor)\solid&*(\ORcolor)&*(\ORcolor)\solid&*(\ORcolor)&*(\ORcolor)\solid&*(\ORcolor)&*(\ORcolor)\solid&*(\ORcolorBack)&*(\ORcolorBack)&*(\ORcolorBack)&*(\ORcolorBack)&*(\ORcolorBack)&*(\ORcolorBack)&*(\ORcolorBack)&*(\ANDcolor)\solid&*(\ANDcolor)\ultri&*(\ANDcolor)&\goal&\solid\\
\none&&&&&&*(\ORcolor)\solid&*(\ORcolor)\ultri&*(\ORcolor)\solid&*(\ORcolor)\ultri&*(\ORcolor)\solid&*(\ORcolor)\ultri&*(\ORcolor)\solid&*(\ORcolorBack)&*(\ORcolorBack)&*(\ORcolorBack)&*(\ORcolorBack)&*(\ORcolorBack)&*(\ORcolorBack)&*(\ORcolorBack)&*(\ANDcolor)\solid&*(\ANDcolor)&*(\ANDcolor)\solid&\solid&\solid\\
\none&&&&&&*(\ORcolor)\solid&*(\ORcolor)&*(\ORcolor)&*(\ORcolor)&*(\ORcolor)&*(\ORcolor)&*(\ORcolor)&*(\ORcolorBack)&*(\ORcolorBack)&*(\ORcolorBack)&*(\ORcolorBack)&*(\ORcolorBack)&*(\ORcolorBack)&*(\ORcolorBack)&*(\ANDcolor)&*(\ANDcolor)\movable&*(\ANDcolor)&*(\ANDcolor)\solid&\\
\none&&&&&&*(\ORcolor)&*(\ORcolor)&*(\ORcolor)&*(\ORcolor)&*(\ORcolor)&*(\ORcolor)\lrtri&*(\ORcolor)\solid&*(\ORcolorBack)&*(\ORcolorBack)&*(\ORcolorBack)&*(\ORcolorBack)&*(\ORcolorBack)&*(\ORcolorBack)&*(\ORcolorBack)&*(\ANDcolor)\solid&*(\ANDcolor)&*(\ANDcolor)\solid&&\\
\none&&&&&&*(\ORcolor)&*(\ORcolor)&*(\ORcolor)&*(\ORcolor)\lrtri&*(\ORcolor)&*(\ORcolor)&*(\ORcolor)\solid&*(\ORcolorBack)&*(\ORcolorBack)&*(\ORcolorBack)&*(\ORcolorBack)&*(\ORcolorBack)&*(\ORcolorBack)&*(\ORcolorBack)&*(\ANDcolor)\solid&*(\ANDcolor)&*(\ANDcolor)\solid&&\\
\none&&&&&&*(\ORcolor)&*(\ORcolor)\lrtri&*(\ORcolor)\solid&*(\ORcolor)\solid&*(\ORcolor)\solid&*(\ORcolor)\solid&*(\ORcolor)\solid&*(\ORcolorBack)&*(\ORcolorBack)&*(\ORcolorBack)&*(\ORcolorBack)&*(\ORcolorBack)&*(\ORcolorBack)&*(\ORcolorBack)&*(\ANDcolor)\solid&*(\ANDcolor)&*(\ANDcolor)\solid&&\\
\none&&&&&&*(\ORcolorBack)&*(\ORcolorBack)&*(\ORcolorBack)&*(\ORcolorBack)&*(\ORcolorBack)&*(\ORcolorBack)&*(\ORcolorBack)&*(\ORcolor)\solid&*(\ORcolor)&*(\ORcolor)\solid&*(\ORcolor)&*(\ORcolor)\solid&*(\ORcolor)&*(\ORcolor)\solid&*(\ANDcolor)\solid&*(\ANDcolor)&*(\ANDcolor)\solid&&\\
\none&&&&&&*(\ORcolorBack)&*(\ORcolorBack)&*(\ORcolorBack)&*(\ORcolorBack)&*(\ORcolorBack)&*(\ORcolorBack)&*(\ORcolorBack)&*(\ORcolor)\solid&*(\ORcolor)\ultri&*(\ORcolor)\solid&*(\ORcolor)\ultri&*(\ORcolor)\solid&*(\ORcolor)\ultri&*(\ORcolor)\solid&*(\ANDcolor)\solid&*(\ANDcolor)&*(\ANDcolor)\solid&&\\
\none&&&&&&*(\ORcolorBack)&*(\ORcolorBack)&*(\ORcolorBack)&*(\ORcolorBack)&*(\ORcolorBack)&*(\ORcolorBack)&*(\ORcolorBack)&*(\ORcolor)\solid&*(\ORcolor)&*(\ORcolor)&*(\ORcolor)&*(\ORcolor)&*(\ORcolor)&*(\ORcolor)&*(\ANDcolor)&*(\ANDcolor)\movable&*(\ANDcolor)&*(\ANDcolor)\solid&\\
\none&&&&&&*(\ORcolorBack)&*(\ORcolorBack)&*(\ORcolorBack)&*(\ORcolorBack)&*(\ORcolorBack)&*(\ORcolorBack)&*(\ORcolorBack)&*(\ORcolor)&*(\ORcolor)&*(\ORcolor)&*(\ORcolor)&*(\ORcolor)&*(\ORcolor)\lrtri&*(\ORcolor)\solid&*(\ANDcolor)\solid&*(\ANDcolor)&*(\ANDcolor)\solid&&\\
\none&&&&&&*(\ORcolorBack)&*(\ORcolorBack)&*(\ORcolorBack)&*(\ORcolorBack)&*(\ORcolorBack)&*(\ORcolorBack)&*(\ORcolorBack)&*(\ORcolor)&*(\ORcolor)&*(\ORcolor)&*(\ORcolor)\lrtri&*(\ORcolor)&*(\ORcolor)&*(\ORcolor)\solid&*(\ANDcolor)\solid&*(\ANDcolor)&*(\ANDcolor)\solid&&\\
\none&&&&&&*(\ORcolorBack)&*(\ORcolorBack)&*(\ORcolorBack)&*(\ORcolorBack)&*(\ORcolorBack)&*(\ORcolorBack)&*(\ORcolorBack)&*(\ORcolor)&*(\ORcolor)\lrtri&*(\ORcolor)\solid&*(\ORcolor)\solid&*(\ORcolor)\solid&*(\ORcolor)\solid&*(\ORcolor)\solid&*(\ANDcolor)\solid&*(\ANDcolor)&*(\ANDcolor)\solid&&\\
\none&&&&&&&&&&&&&&&&&&&&*(\ANDcolor)&*(\ANDcolor)\lrtri&*(\ANDcolor)\solid&&\\
\none&&&&&&&&&&&&&&&&&&&&*(\ANDcolor)\solid&*(\ANDcolor)\solid&*(\ANDcolor)\solid&&\\
\none&\solid&\solid&\solid&\solid&\solid&\solid&\solid&\solid&\solid&\solid&\solid&\solid&\solid&\solid&\solid&\solid&\solid&\solid&\solid&\solid&\solid&\solid&\solid&\solid\\
\end{ytableau}
$};
\end{tikzpicture}
\caption{The puzzle corresponding to the expression
$(A\vee B\vee \neg C)\wedge (A\vee \neg B\vee C)$.
Note that if $A=\mathtt{true}$, the expression is satisfied.
Thus this particular puzzle can be solved without deciding truth values for the variables $B$ and $C$,
and the movable blocks in the $B$ and $C$ variable gadgeds do not need to be moved.
}\label{fig:fullExample}
\end{figure}
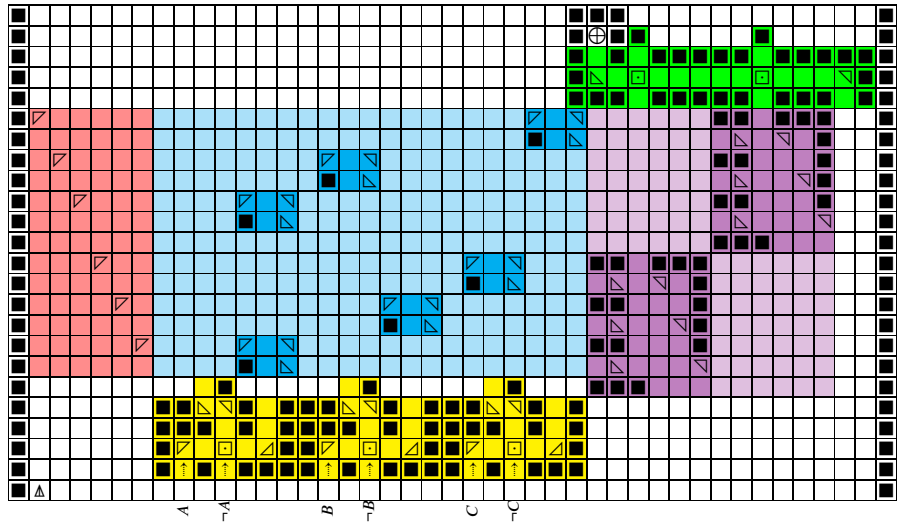

The following lemma shows that solving LaserTank puzzles can be done in polynomial time with 
a non-deterministic Turing machine. Hence LaserTank is in $\mathrm{NP}$.
\begin{lemma}\label{lem:PolyRunTIme}
A solution consisting of $k$ steps to a LaserTank puzzle on a board of size $n$ 
can be verified in time $O(kn)$.
\end{lemma}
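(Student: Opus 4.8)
The plan is to exhibit a concrete verifier that simulates the proposed solution action by action. I would store the full game state as a two-dimensional array recording the contents of each cell (empty, $\solid$, $\movable$, a mirror $\lltri/\urtri/\lrtri/\ultri$, or $\goal$) together with the current position and facing of the tank $\laser$; this representation has size $O(n)$. The certificate is a list of $k$ actions, and the verifier processes them in order: for each action it checks that the action is legal in the current state, updates the state accordingly, and sets a flag the first time a laser strikes $\goal$. At the end it accepts if and only if that flag is set (equivalently, one may demand that the last action is the one hitting the goal). Correctness is immediate, since the verifier merely runs the rules of the game; the whole content of the lemma is the running time, and it suffices to show that each of the $k$ actions is processed in time $O(n)$.

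A move or turn changes only the recorded tank position and facing, after an $O(n)$ check that the target cell is in bounds and not a $\solid$. A fire action is handled by tracing the laser from the cell in front of the tank, stepping one cell at a time in the current beam direction: an empty cell is passed through; a $\solid$ stops the beam; $\goal$ stops the beam and sets the flag; a mirror struck on a reflective (slanted) edge redirects the beam and the trace continues; and a mirror struck on a short edge, or a $\movable$ block, stops the beam and — provided the single cell directly behind it is empty — is relocated one step in the beam direction. In particular at most one piece is displaced per fire, so the board update for a fire touches only $O(1)$ cells. Hence, apart from the trace itself, each action costs $O(n)$.

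The one substantive point is that the trace of a single fire visits only $O(n)$ cells, and this is where I expect the only real obstacle to lie. During a single fire the board is static until the beam halts, so the beam's situation is completely described by the pair (current cell, current direction), of which there are at most $4n$. If the beam has not halted after $4n+1$ cells it has repeated such a pair, so from then on it cycles forever and can never encounter $\goal$ or anything pushable; the fire therefore has no effect and the trace may be abandoned. Consequently every fire visits at most $4n+1$ cells and costs $O(n)$, every move costs $O(n)$, reading the input and certificate costs $O(n+k)$, and the total verification time is $O(kn)$. Together with the reduction developed in the rest of the paper, this yields \cref{thm:MainTheorem}.
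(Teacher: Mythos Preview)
Your proof is correct and follows the same overall approach as the paper: simulate each of the $k$ actions, with the only nontrivial point being that a single laser trace touches at most $O(n)$ (cell, direction) pairs. The paper justifies this bound by noting that the beam dynamics are time-reversible, so the trace can never enter a cycle at all and hence halts in under $4n$ steps; you instead use pigeonhole to detect a putative cycle after $4n+1$ steps and correctly observe that such a fire would have no effect and may be abandoned---either way the per-action cost is $O(n)$ and the $O(kn)$ bound follows.
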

\begin{proof}
It is straightforward to show that the laser movement is time-reversible.
This implies that it is impossible for a laser shot by the tank to end up in an ``infinite loop'' 
while being reflected by mirrors. Remember also that the laser stops as soon as it hits a solid block,
a movable block, or moves a mirror.
It follows that after firing the laser, it takes less than $4n$ steps before the laser finds its final 
destination, where $n$ is the number of tiles on the board.
Simulating a sequence of $k$ moves thus requires $O(kn)$ time.
\end{proof}

From our construction, it is a straightforward calculation 
to see that given a $3$-SAT expression with $V$ variables 
and $C$ clauses gives a puzzle contained on a 
board with size $(7V+9C+4)(7C+9)$.
This is evidently polynomial in the size of the expression.

\begin{proof}[of \cref{thm:MainTheorem}]
A $3$-SAT problem can be converted to a LaserTank puzzle in polynomial time 
since the board size is a polynomial in the number of variables and clauses.
Furthermore, a solution to such a LaserTank puzzle can easily be translated 
back to a solution of the original $3$-SAT problem in polynomial time,
by simply performing all the steps. Note that a LaserTank puzzle solution might not 
decide the truth value of some variables (see caption of \cref{fig:fullExample}), 
in which case, one may simply let these values be \texttt{true}.
According to \cref{lem:PolyRunTIme},
the translation of a puzzle solution to a $3$-SAT solution only requires a 
polynomial time in the input (number of steps).
This shows that LaserTank is at least as hard as $3$-SAT. 
Finally, \cref{lem:PolyRunTIme} shows that a solution can be verified in polynomial time
and hence LaserTank is NP-complete.
\end{proof}

Notice that in both the \texttt{and}-- and \texttt{literal}-gadget, each
movable block can be replaced with a $\ultri$-mirror without changing the behavior of the gadget. 
Thus \cref{thm:MainTheorem} is valid even in the case when restricting to puzzles without movable blocks.
Furthermore, we can extend \cref{thm:MainTheorem} to the case where the tank can turn and move in all four directions. 
To do so, we need to make sure the tank only has access to the same inputs as in the previous setup.
This can be done by inserting additional rows in the puzzle such that every other row is empty,
and then inserting two columns in with the pattern 
$\ytableausetup{boxsize=0.75em}
\begin{ytableau}
\urtri & \ultri \\ 
\lltri & \lrtri
\end{ytableau}$
between the initial position of the tank and the rest of the board.
We leave the details to the reader.

\bibliographystyle{splncs04}
\bibliography{bibliography}

\end{document}